\documentclass[authoryear, preprint]{elsarticle}
%%%%%%%%%%%%%%%%%%%%%%%%%%%%%%%%%%%%%%%%%%%%%%%%%%%%%%%%%%%%%%%%%%%%%%%%%%%%%%%%%%%%%%%%%%%%%%%%%%%%%%%%%%%%%%%%%%%%%%%%%%%%[12pt, letter]
\usepackage{amsthm}
\usepackage{amsmath}
\usepackage{amsfonts}
\usepackage{amssymb}
\usepackage{graphicx}
\usepackage[english]{babel}
\usepackage{natbib}
\usepackage[margin=1.5in]{geometry}

\bibliographystyle{elsarticle-harv}

\newtheorem{prp}{Proposition}

% \renewcommand{\rmdefault}{}
%\hoffset = -0.5 in \voffset = -.7 in \addtolength{\textheight}{4
%cm} \addtolength{\textwidth}{2.3 cm}\parskip = 0.31 cm
%\pagenumbering{arabic}
% \input{tcilatex}
\begin{document}

\title{Multi-Purpose Binomial Model: Fitting all Moments to the Underlying Geometric Brownian Motion}

\author[akim]{Y. S. Kim}
\ead{aaron.kim@stonybrook.edu}

\author[sst]{S. Stoyanov}
\ead{stoyan.stoyanov@stonybrook.edu}

\author[rachev]{S. Rachev}
\ead{svetlozar.rachev@stonybrook.edu}

\author[ff]{F. Fabozzi\corref{ff1}}
\ead{frank.fabozzi@edhec.edu fabozzi321@aol.com}
\cortext[ff1]{Corresponding author: Tel: 01 215 598-8924}

\address[akim]{College of Business, Stony Brook University, Stony Brook, New York 11794-3775}
\address[sst]{College of Business, Stony Brook University, Stony Brook, New York 11794-3775} \address[rachev]{Applied Mathematics and Statistics, Stony Brook University, Stony Brook, New York 11794-3775}
\address[ff]{EDHEC Business School, 393, Promenade des Anglais BP3116, 06202 Nice Cedex 3, Nice, France}

%\author{Aaron Kim
%\\
%{\footnotesize Stony Brook University} \\
%{\footnotesize e-mail: aaron.kim@stonybrook.edu} \\
%\\
%Svetlozar Rachev
%\\
%{\footnotesize Stony Brook University} \\
%{\footnotesize e-mail: svetlozar.rachev@stonybrook.edu} \\
%\\
%Stoyan Stoyanov
%\\
%{\footnotesize Stony Brook University} \\
%{\footnotesize e-mail: stoyan.stoyanov@stonybrook.edu} \\
%\\
%Frank J. Fabozzi\footnote{Corresponding author: Frank J.
%Fabozzi}
%%Preferred address: 858 Tower View Circle, New
%%Hope, PA 18938, tel: +1 (215) 598-8924} 
%\\
%{\footnotesize EDHEC Business School}\\
%{\footnotesize e-mail: fabozzi321@aol.com} }
%\date{}

\begin{abstract}
We construct a binomial tree model fitting all moments to the approximated geometric Brownian motion. Our construction generalizes the classical Cox-Ross-Rubinstein, the Jarrow-Rudd, and the Tian binomial tree models. The new binomial model is used to resolve a discontinuity problem in option pricing. 
\end{abstract}

\begin{keyword}Binomial tree model \sep option pricing \sep geometric Brownian motion \sep partial hedging \\
\JEL G12, G13
\end{keyword}
%\paragraph{Highlight:} 
% \paragraph{JEL Codes: }G12, G13
\maketitle

\newpage
\begin{center}
\begin{Large}
Multi-Purpose Binomial Model: Fitting all Moments to the Underlying Geometric Brownian Motion
\end{Large}
\end{center}
\section{Introduction}
The binomial tree model for option pricing was introduced by \cite{CRR79}, which we refer to in this paper as the CRR model. We assume a stock is traded in the time interval $[0, T]$ and has a price $S_t$ at time $t$. The time interval is divided into $n$ equal periods defined by the instants $t_k = k\Delta t$, $k=0,1, \ldots, n$ in which $n\Delta t=T$. The price at the next time instant $t_k + \Delta t$ equals,\footnote{Here and in what follows, all terms of order $o(\Delta t)$ are omitted.}
\begin{equation*}%\label{eq: Sud}
S_{t_k + \Delta t} = \left\{\begin{array}{ll}S_{t_k + \Delta t, up} = S_{t_k}e^{\sigma\sqrt{\Delta t}} & w.p.\ p_{\Delta t; CRR} = \frac{e^{r\Delta t} - e^{-\sigma\sqrt{\Delta t}}}{e^{\sigma\sqrt{\Delta t}} - e^{-\sigma\sqrt{\Delta t}}}\\ S_{t_k + \Delta t, down} = S_{t_k}e^{-\sigma\sqrt{\Delta t}} & w.p.\ 1 - p_{\Delta t; CRR} \end{array}\right.
\end{equation*}
The underlying discrete stochastic process generated by the tree converges weakly to the geometric Brownian motion (GBM):
\begin{equation}\label{eq: GBM}
S_t = S_{t}^{(r,\sigma)} = e^{(r - \sigma^2/2)t + \sigma B(t)}
\end{equation}
in which $B(t)$ denotes the Brownian motion. 

The weak convergence of a binomial tree to the underlying GBM is understood in the following sense.  Consider the sequence of log-returns, $R_{k\Delta t} = \log(S_{k\Delta t} ) - \log(S_{(k-1)\Delta t}),\ k=1,\ldots,n$. Denote by $R_{t;n},\ t \in [0,T]$  the random polygon linearly connecting the vertexes $(k\Delta t,R_{k\Delta t}),\ k = 0, 1,\ldots, n$. Then verifying the assumptions in  Proposition 3 in \cite{DR08} for a given binomial pricing model, leads to Prokhorov-weak convergence in $C[0,1]$ of the sequence of random polygons to the corresponding arithmetic Brownian motion with instantaneous mean r and instantaneous variance $\sigma^2$. We designate the corresponding  convergence of the binomial tree to the GBM as ``$\stackrel{w}{\rightarrow}$''.

Alternative binomial models were later introduced. The two most notable are the Jarrow-Rudd model and the Tian model. \cite{JR83} match the first two moments of the tree with
\begin{equation*}%\label{eq: GBM}
S_{t_k + \Delta t} = S_{t_k}e^{(r - \sigma^2/2)\Delta t + \sigma(B(t + \Delta t) - B(t))}
\end{equation*}
with a symmetric probability for ``up'' and ``down''
\begin{equation*}%\label{eq: Sud}
S_{t_k + \Delta t} = \left\{\begin{array}{ll}S_{t_k + \Delta t, up} = S_{t_k}e^{(r - \sigma^2/2)\Delta t + \sigma\sqrt{\Delta t}} & w.p.\ p_{\Delta t; JR} = 1/2\\ S_{t_k + \Delta t, down} = S_{t_k}e^{(r - \sigma^2/2)\Delta t - \sigma\sqrt{\Delta t}} & w.p.\ 1 - p_{\Delta t; JR} \end{array}\right.
\end{equation*}
\cite{T93} introduced a tree model 
\begin{equation*}%\label{eq: Sud}
S_{t_k + \Delta t} = \left\{\begin{array}{ll}S_{t_k + \Delta t, up} =S_{t_k}u& w.p.\ p_{\Delta t; TI}\\ S_{t_k + \Delta t, down} = S_t d& w.p.\ 1 - p_{\Delta t; TI} \end{array}\right.
\end{equation*}
where $u = \frac{1}{2}e^{r\Delta t} V(V + 1 + \sqrt{V^2 + 2V - 3})$, $d = \frac{1}{2}e^{r\Delta t} V(V + 1 - \sqrt{V^2 + 2V - 3})$,  $V = e^{\sigma^2\Delta t}$ and $\ p_{\Delta t; TI} = \frac{e^{r\Delta t - d}}{u - d}$ . Tian's model matches the first three moments of the tree and the underlying GBM. 

In this paper we introduce a multi-purpose binomial model which encompasses the CRR,  Jarrow-Rudd, and Tian models as particular cases. Furthermore, a particular case of our model fits all moments of the tree to the underlying GBM.

\section{A Multi-Purpose Binomial Model}

Consider a GBM with instantaneous mean $b \in \mathbb{R}$ and volatility $\sigma > 0$ and the following binomial tree:
\begin{equation}\label{eq: mpbm exp}
S_{t_k + \Delta t} = \left\{\begin{array}{ll}S_{t_k + \Delta t, up} = S_{t_k}\exp((\gamma - h_u^2/2)\Delta t + h_u\sqrt{\Delta t}) & w.p.\ p_{\Delta t}\\ S_{t_k + \Delta t, down} = S_{t_k}\exp((\delta - h_d^2/2)\Delta t - h_d\sqrt{\Delta t}) & w.p.\ 1 - p_{\Delta t} \end{array}\right.
\end{equation}
where  $h_u = \sigma\sqrt{\frac{1 - p}{p}}$, $h_d = \sigma\sqrt{\frac{p}{1 - p}}$ and the tree parameters satisfy the following assumptions:
\begin{itemize}
\item[i. ] $p_{\Delta t} = g + v\sqrt{\Delta t}$ in which $g \in (0, 1)$ and  $v \in \mathbb{R}$. 
\item[ii. ] $\sigma > 0$ is fixed. 
\item[iii. ] The parameter $b$ satisfies $b = g\gamma + (1 - g)\delta$ in which $\gamma  \in \mathbb{R}$ and $\delta \in \mathbb{R}$. 
\item[iv. ] For given model parameters $g \in (0, 1)$ and  $v \in \mathbb{R}$, the frequency $\Delta t > 0$ is small enough so that $p_{\Delta t} \in (0, 1)$.
\end{itemize}
Ignoring the higher order terms, the asymptotics for small $\Delta t$ are
\begin{equation}\label{eq: mpbm}
S_{t_k + \Delta t} = \left\{\begin{array}{ll}S_{t_k + \Delta t, up} = S_{t_k}\left(1 + \gamma\Delta t + \sqrt{\frac{1 - p_{\Delta t}}{p_{\Delta t}}}\sigma\sqrt{\Delta t}\right) & w.p.\ p_{\Delta t}\\ S_{t_k + \Delta t, down} = S_{t_k}\left(1 + \delta\Delta t - \sqrt{\frac{p_{\Delta t}}{1 - p_{\Delta t}}}\sigma\sqrt{\Delta t}\right) & w.p.\ 1 - p_{\Delta t} \end{array}\right.
\end{equation}

The motivation for this model is the following. In all previous binomial models the parameters governing the up’s and down’s probabilities are chosen to simplify the computational complexity; they are completely determined by the instantaneous mean and variance of the underlying GBM which are estimated from historical return data. It is worth noting that in CRR, Jarrow-Rudd and Tian models, $p_{\Delta t}=1/2+o(\sqrt{\Delta t})$ which corresponds in our model to $g=1/2$. Statistical evidence in high frequency trading shows that the assumption $g=1/2$ is problematic. Here we assume that independent of the existing estimates for the mean return and variance, estimates for the $p_{\Delta t}=g+v\sqrt{\Delta t}$ are also available in different return frequencies. This is done by observing the proportions of positive returns in large number of intraday return frequencies.  Thus, our multi-purpose binomial model has an additional parameter $p_{\Delta t}$, which allows us to resolve the option discontinuity puzzle, described later in this paper.

First, we prove that the binomial model converges to GBM. 
\begin{prp}\label{prp: conv} The binomial tree given by \eqref{eq: mpbm} converges in $\stackrel{w}{\rightarrow}$ to \eqref{eq: GBM} with $r = b$. The rate of convergence in terms of the Kolmogorov metric can be approximated by the following relationship
$$\sup_{x \geq 0}|F_{n, S_t}(x) - F_{S_t}(x)| \sim \frac{1}{\sqrt{n}}\times\frac{1 - 2p + 2p^2}{\sqrt{p(1 - p)}}$$
where $F_{n, S_t}(x)$ is the distribution function of $S_t$ approximated by the tree in \eqref{eq: mpbm} with $n$ steps, $F_{S_t}(x)$ is the distribution function of the geometric Brownian motion. 
\end{prp}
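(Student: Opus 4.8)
The plan is to regard the tree's log-price as a random walk with i.i.d.\ increments and to invoke, for the convergence assertion, the Lindeberg--Feller/Donsker machinery underlying Proposition~3 in \cite{DR08}, and for the rate the Berry--Esseen theorem. First I would pass to the log-scale. In \eqref{eq: mpbm exp} the one-step multiplier does not depend on the current price, so the log-returns $R_{k\Delta t}=\log S_{t_k+\Delta t}-\log S_{t_k}$ are i.i.d., $\log S_t=\log S_0+\sum_{k\le t/\Delta t}R_{k\Delta t}$, and under \eqref{eq: GBM} with $r=b$ the variable $\log S_t$ is \emph{exactly} $N\!\big((b-\sigma^2/2)t,\ \sigma^2 t\big)$. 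Since $e^{\,\cdot}$ is increasing, $\sup_{x\ge 0}|F_{n,S_t}(x)-F_{S_t}(x)|=\sup_{y\in\mathbb R}|F_{n,\log S_t}(y)-F_{\log S_t}(y)|$, so everything becomes a statement about a triangular array of row-wise i.i.d.\ sums, the single increment taking the value $a_u=(\gamma-h_u^2/2)\Delta t+h_u\sqrt{\Delta t}$ w.p.\ $p=p_{\Delta t}$ and $a_d=(\delta-h_d^2/2)\Delta t-h_d\sqrt{\Delta t}$ w.p.\ $1-p$.

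Next I would record the moments of one increment. The very choice $h_u=\sigma\sqrt{(1-p)/p}$, $h_d=\sigma\sqrt{p/(1-p)}$ gives $p\,h_u=(1-p)\,h_d=\sigma\sqrt{p(1-p)}$, so the $\sqrt{\Delta t}$-term of $E[R_{k\Delta t}]$ cancels identically; combined with $p\,h_u^2+(1-p)\,h_d^2=\sigma^2$, assumption (iii) $b=g\gamma+(1-g)\delta$, and assumption (i) $p=g+v\sqrt{\Delta t}$, a short computation yields $E[R_{k\Delta t}]=(b-\sigma^2/2)\Delta t+O(\Delta t^{3/2})$ and $\mathrm{Var}(R_{k\Delta t})=\sigma^2\Delta t+O(\Delta t^{3/2})$. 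The hypotheses of Proposition~3 in \cite{DR08} (equivalently Lindeberg--Feller) are then immediate: the partial sums of means converge to $(b-\sigma^2/2)t$, those of variances to $\sigma^2 t$, and the Lindeberg condition holds trivially since $|R_{k\Delta t}|\le C\sqrt{\Delta t}\to 0$ uniformly. This gives $\stackrel{w}{\rightarrow}$ to \eqref{eq: GBM} with $r=b$.

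For the rate I would fix $t=T$ (so $n$ steps) and apply Berry--Esseen to $\sum_{k\le n}R_{k\Delta t}$: $\sup_y|F_{n,\log S_T}(y)-\Phi_{m_n,s_n}(y)|\le C_0\,E|R_1-ER_1|^3/\big((\mathrm{Var}\,R_1)^{3/2}\sqrt n\big)$, where $\Phi_{m_n,s_n}$ is the normal law carrying the tree's own mean $m_n=nER_1$ and variance $s_n^2=n\,\mathrm{Var}\,R_1$. Since $R_1-ER_1$ equals $h_u\sqrt{\Delta t}$ (resp.\ $-h_d\sqrt{\Delta t}$) up to $O(\Delta t)$, one gets $E|R_1-ER_1|^3=\big(p\,h_u^3+(1-p)\,h_d^3\big)(\Delta t)^{3/2}+o\big((\Delta t)^{3/2}\big)=\sigma^3\,\dfrac{p^2+(1-p)^2}{\sqrt{p(1-p)}}\,(\Delta t)^{3/2}+o\big((\Delta t)^{3/2}\big)$, while $(\mathrm{Var}\,R_1)^{3/2}=\sigma^3(\Delta t)^{3/2}(1+o(1))$; hence the ratio $\to (1-2p+2p^2)/\sqrt{p(1-p)}$ upon using $1-2p+2p^2=p^2+(1-p)^2$. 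Since moreover $m_n\to(b-\sigma^2/2)T$ and $s_n^2\to\sigma^2 T$ at rate $O(1/\sqrt n)$, so that the gap between $\Phi_{m_n,s_n}$ and $F_{\log S_T}$ is of the same order and is dominated by the skewness term in the relevant parameter ranges, reading ``$\sim$'' as an order statement with the universal Berry--Esseen constant suppressed produces the displayed asymptotics.

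The main obstacle is the precise meaning of ``$\sim$''. Promoting the one-sided Berry--Esseen bound to a genuine two-sided equivalence requires the first-order Edgeworth correction $-\tfrac{\mu_3}{6s^3\sqrt n}(y^2-1)\phi(y)$, whose $\sup_y$ contributes only a further numerical constant; one must also check that this term really dominates the mean/variance-matching error introduced by the $p_{\Delta t}=g+v\sqrt{\Delta t}$ and $b=g\gamma+(1-g)\delta$ specifications (the former already produces a mean shift of exact order $v(\gamma-\delta)T^{3/2}/\sqrt n$), and decide whether the constant is evaluated at the limit $p=g$ or at $p=p_{\Delta t}$ for finite $n$. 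The remaining ingredients — the three moment identities above and the uniform control of the $o(\Delta t)$ remainders after summing $n=T/\Delta t$ of them — are routine.
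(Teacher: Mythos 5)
Your proposal is correct and follows essentially the same route as the paper: verification of the first two moments of the one-step return to invoke Proposition 3 of \cite{DR08} for the weak convergence, and a Berry--Esseen bound with the third-absolute-moment computation $p\,h_u^3+(1-p)\,h_d^3=\sigma^3\,\frac{p^2+(1-p)^2}{\sqrt{p(1-p)}}$ giving the factor $\frac{1-2p+2p^2}{\sqrt{p(1-p)}}$ for the rate. The only cosmetic difference is that you apply Berry--Esseen to the full log-increment (absorbing the drift and variance-correction terms into $m_n,s_n$ and then bounding the resulting mismatch), whereas the paper first removes those terms via the strong law of large numbers and applies Berry--Esseen to $\frac{1}{\sqrt n}\sum_k X_k$ alone; your closing remarks on the meaning of ``$\sim$'' are, if anything, more careful than the paper's own argument, which only establishes the one-sided bound with an absolute constant.
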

\begin{proof} Let $Q_{t_k + \Delta t}^{p_{\Delta t}} = \frac{S_{t_k}}{S_{t_{k-1}}}$ and recall that for $S_{\Delta t} = e^{(b - \sigma^{2}/2)\Delta t + \sigma B(\Delta t)}$, 
\begin{equation}\label{eq: mom1}
ES_{\Delta t}^{n} = e^{n(b + \frac{n-1}{2}\sigma^{2})\Delta t} = 1 + n\left(b + \frac{n - 1}{2}\sigma^2\right)\Delta t. 
\end{equation}
Ignoring the terms of order higher than $\Delta t$, for the first moment we obtain
\begin{align*}
EQ_{t_k + \Delta t}^{p_{\Delta t}} &= \left(1 + \gamma\Delta t + \sqrt{\frac{1 - p_{\Delta t}}{p_{\Delta t}}}\sigma\sqrt{\Delta t}\right)p_{\Delta t} \\ &\qquad+ \left(1 + \delta\Delta t - \sqrt{\frac{p_{\Delta t}}{1 - p_{\Delta t}}}\sigma\sqrt{\Delta t}\right)(1 - p_{\Delta t}) \\&= \left(1 + \gamma\Delta t\right)p_{\Delta t} + \left(1 + \delta\Delta t\right)(1 - p_{\Delta t})\\
&= 1 + (g\gamma + (1 - g)\delta)\Delta t\\
&= 1 + b\Delta t
\end{align*}
A similar calculation shows that $var(Q_{t_k + \Delta t}^{p_{\Delta t}}) = \sigma^2\Delta t$. Finally, Proposition 3 in \cite{DR08} is easily adapted to our case, which proves Proposition \ref{prp: conv}. 

%\begin{align*}
%E(Q_{t_k + \Delta t}^{p_{\Delta t}})^2 &= \left(1 + \gamma\Delta t + \sqrt{\frac{1 - p_{\Delta t}}{p_{\Delta t}}}\sigma\sqrt{\Delta t}\right)^2p_{\Delta t} \\ &\qquad+ \left(1 + \delta\Delta t - \sqrt{\frac{p_{\Delta t}}{1 - p_{\Delta t}}}\sigma\sqrt{\Delta t}\right)^2(1 - p_{\Delta t}) \\ &= \left(1 + \left(2\gamma + \frac{1 - p_{\Delta t}}{p_{\Delta t}}\sigma^2\right)\Delta t + 2\sqrt{\frac{1 - p_{\Delta t}}{p_{\Delta t}}}\sigma\sqrt{\Delta t}\right)p_{\Delta t} \\ &\qquad+ \left(1 + \left(2\delta + \frac{ p_{\Delta t}}{1 - p_{\Delta t}}\sigma^2\right)\Delta t  - 2\sqrt{\frac{p_{\Delta t}}{1 - p_{\Delta t}}}\sigma\sqrt{\Delta t}\right)(1 - p_{\Delta t}) \\ &= \left(1 + \left(2\gamma + \frac{1 - p_{\Delta t}}{p_{\Delta t}}\sigma^2\right)\Delta t\right)p_{\Delta t} + \left(1 + \left(2\delta + \frac{ p_{\Delta t}}{1 - p_{\Delta t}}\sigma^2\right)\Delta t\right)(1 - p_{\Delta t}) \\
%&= 1 + (2\gamma p_{\Delta t} + 2\delta(1 - p_{\Delta t}) + \sigma^{2})\Delta t\\
%&= 1 + (2\gamma g+ 2\delta(1 -g) + \sigma^{2})\Delta t\\
%&= 1 + 2b\Delta t + \sigma^2\Delta t \\
%&= (EQ_{t_k + \Delta t}^{p_{\Delta t}})^2 + \sigma^2\Delta t 
%\end{align*}
%Because $var(Q_{t_k + \Delta t}^{p_{\Delta t}}) = E(Q_{t_k + \Delta t}^{p_{\Delta t}})^2 - (EQ_{t_k + \Delta t}^{p_{\Delta t}})^2$, it follows that $var(Q_{t_k + \Delta t}^{p_{\Delta t}}) = \sigma^2\Delta t$. 
The rate of convergence follows from the functional form in \eqref{eq: mpbm exp}. Without loss of generality, assume $t = 1$. The price at $t = 1$ can be represented as
\begin{equation*}%\label{eq: mpbm rv}
S_{1,n} = \prod_{k = 1}^{n} e^{(V_k - X_k^2/2)\frac{1}{n} + X_k\frac{1}{\sqrt{n}}}
= \exp\left(\frac{1}{n}\sum_{k = 1}^{n}V_k - \frac{1}{2n}\sum_{k = 1}^{n} X_k^2 + \frac{1}{\sqrt{n}}\sum_{k = 1}^{n}X_k\right)
\end{equation*}
where $(V_k, X_k)$ are independent for $k = 1,2,\ldots$ taking values $(\gamma, \sigma\sqrt{\frac{1 -  p}{p}})$ with probability $p_{\Delta t}$ and $(\delta, -\sigma\sqrt{\frac{1}{1 - p}})$ with probability $1 - p_{\Delta t}$. By the strong law of large numbers, $\frac{1}{n}\sum_{k = 1}^{n}V_k \rightarrow b$ and $\frac{1}{n}\sum_{k = 1}^{n} X_k^2 \rightarrow \sigma^2$ almost surely. Denote $Y_n = \frac{1}{\sqrt{n}}\sum_{k = 1}^{n}X_k$, $S_{1,n} = e^{b - \sigma^2/2}U_n$ in which $U_n = e^{Y_n}$, and $Z \in N(0, \sigma^2)$. Ignoring the impact of the constants, by the Berry-Esseen theorem\footnote{See, for example, \cite[pp 542]{Fel71}.} we obtain the following bound,
\begin{align*}%\label{eq: mpbm rv}
\sup_{x \geq 0}|F_{U_n}(x) - F_{e^Z}(x)| &= \sup_{x \geq 0}|P(Y_n \leq \log{x}) - P(Z \leq \log{x})|\\
&= \sup_{y \in \mathbb{R}}|P(Y_n \leq y) - P(Z \leq y)|\\
&\leq C\frac{E|X_1|^3}{\sigma^3\sqrt{n}} = \frac{C}{\sqrt{n}}\times\frac{1 - 2p + 2p^2}{\sqrt{p(1 - p)}}
\end{align*}
where $C$ is an absolute constant. 
\end{proof}

Next, we demonstrate that the CRR, Jarrow-Rudd, and Tian models arise from \eqref{eq: mpbm}. 
\begin{prp} The following statements hold:
\begin{itemize}
\item[a) ] We obtain the CRR model by setting $\gamma = \delta  = r$, $g = 1/2$, and $v = \frac{r - \sigma^2/2}{2\sigma}$ in \eqref{eq: mpbm}. 
\item[b) ] We obtain the Jarrow-Rudd model by setting $\gamma = \delta = r$, $g = 1/2$, and $v = 0$ in \eqref{eq: mpbm}. 
\item[c) ] We obtain the Tian model by setting $\gamma = \delta = r + \frac{3}{2}\sigma^2$, $g = 1/2$, and $v = \frac{r - \sigma^2/2}{2\sigma}$ in \eqref{eq: mpbm}. 
\end{itemize}
\end{prp}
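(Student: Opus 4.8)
\section*{Proof proposal}

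The plan is to prove all three parts by direct substitution of the prescribed parameter values into the tree \eqref{eq: mpbm} (equivalently into its exponential form \eqref{eq: mpbm exp}), followed by a Taylor expansion in powers of $\sqrt{\Delta t}$, and then by matching the resulting up-factor, down-factor and up-probability with those of the target model, modulo the $o(\Delta t)$ terms (and $o(\sqrt{\Delta t})$ terms for probabilities) that are discarded throughout the paper. The single observation that streamlines all three cases is that each of a), b), c) takes $g = \tfrac12$. Writing $p_{\Delta t} = \tfrac12 + v\sqrt{\Delta t}$ gives $\frac{1 - p_{\Delta t}}{p_{\Delta t}} = 1 - 4v\sqrt{\Delta t} + o(\sqrt{\Delta t})$, hence $h_u = \sigma\bigl(1 - 2v\sqrt{\Delta t}\bigr) + o(\sqrt{\Delta t})$ and $h_d = \sigma\bigl(1 + 2v\sqrt{\Delta t}\bigr) + o(\sqrt{\Delta t})$. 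Inserting these into \eqref{eq: mpbm exp} and expanding the exponentials, the tree collapses to the nearly symmetric form with up-factor $1 + \sigma\sqrt{\Delta t} + (\gamma - 2v\sigma)\Delta t + o(\Delta t)$, down-factor $1 - \sigma\sqrt{\Delta t} + (\delta - 2v\sigma)\Delta t + o(\Delta t)$, and up-probability $\tfrac12 + v\sqrt{\Delta t}$; what remains is to compare these three scalar expansions with the corresponding quantities of CRR, Jarrow--Rudd and Tian.

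For b) the identification is immediate and, in fact, exact: with $v = 0$ we have $p_{\Delta t} \equiv \tfrac12$ and $h_u = h_d = \sigma$ identically, so \eqref{eq: mpbm exp} with $\gamma = \delta = r$ is literally $S_{t_k}\exp\bigl((r - \sigma^2/2)\Delta t \pm \sigma\sqrt{\Delta t}\bigr)$ with symmetric probabilities, i.e.\ the Jarrow--Rudd tree. For a), with $\gamma = \delta = r$ the up-factor from the previous paragraph is $1 + \sigma\sqrt{\Delta t} + (r - 2v\sigma)\Delta t + o(\Delta t)$; substituting $v = \frac{r - \sigma^2/2}{2\sigma}$ gives $r - 2v\sigma = \sigma^2/2$, so the up-factor becomes $1 + \sigma\sqrt{\Delta t} + \tfrac12\sigma^2\Delta t + o(\Delta t)$, which coincides to this order with the CRR jump $e^{\sigma\sqrt{\Delta t}}$; the down-factor is handled symmetrically, and the probability is dealt with by expanding $p_{\Delta t; CRR} = \frac{e^{r\Delta t} - e^{-\sigma\sqrt{\Delta t}}}{e^{\sigma\sqrt{\Delta t}} - e^{-\sigma\sqrt{\Delta t}}}$ to first order in $\sqrt{\Delta t}$ and checking it equals $\tfrac12 + \frac{r - \sigma^2/2}{2\sigma}\sqrt{\Delta t} + o(\sqrt{\Delta t}) = g + v\sqrt{\Delta t}$.

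Part c) is where the bulk of the computation sits, and it is the step I expect to be the main obstacle, since the Tian quantities $u$, $d$ and $p_{\Delta t; TI}$ are presented not exponentially but through the auxiliary variable $V = e^{\sigma^2\Delta t}$ and the nested radical $\sqrt{V^2 + 2V - 3}$. The core estimate is $V^2 + 2V - 3 = 4\sigma^2\Delta t + O(\Delta t^2)$, so that $\sqrt{V^2 + 2V - 3} = 2\sigma\sqrt{\Delta t} + O(\Delta t^{3/2})$; substituting this back into Tian's closed forms produces expansions of $u$ and $d$ through order $\Delta t$ and of $p_{\Delta t; TI}$ through order $\sqrt{\Delta t}$, which are then to be matched against the symmetric up/down-factors $1 \pm \sigma\sqrt{\Delta t} + (\gamma - 2v\sigma)\Delta t$ with $\gamma = \delta = r + \tfrac32\sigma^2$ and the probability $\tfrac12 + v\sqrt{\Delta t}$ with $v = \frac{r - \sigma^2/2}{2\sigma}$. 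The only genuine difficulty throughout is bookkeeping: one must expand the radical, the factor $\sqrt{(1 - p_{\Delta t})/p_{\Delta t}}$ and the exponentials to sufficiently high order before truncating, retaining terms of order $\sqrt{\Delta t}$ in the probabilities and order $\Delta t$ in the jump sizes and discarding everything beyond.
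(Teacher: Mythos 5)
Your method---substitute, expand in powers of $\sqrt{\Delta t}$, and compare the $\sqrt{\Delta t}$ and $\Delta t$ coefficients of the jump factors together with the $\sqrt{\Delta t}$ coefficient of the probability---is exactly what the paper's one-line proof (``comparing the leading terms'') refers to, and your treatment of a) and b) is correct: the reduction with $g=\tfrac12$ to up-factor $1+\sigma\sqrt{\Delta t}+(\gamma-2v\sigma)\Delta t$, down-factor $1-\sigma\sqrt{\Delta t}+(\delta-2v\sigma)\Delta t$ and probability $\tfrac12+v\sqrt{\Delta t}$ is right, as is the check $p_{\Delta t;CRR}=\tfrac12+\frac{r-\sigma^2/2}{2\sigma}\sqrt{\Delta t}+o(\sqrt{\Delta t})$, and b) is indeed exact.

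The genuine gap is part c), which you defer as ``bookkeeping'': if you actually carry out the plan you describe, the match does not close with the stated parameters. Your estimate $\sqrt{V^2+2V-3}=2\sigma\sqrt{\Delta t}+O(\Delta t^{3/2})$ is correct and yields $u=1+\sigma\sqrt{\Delta t}+\bigl(r+\tfrac32\sigma^2\bigr)\Delta t+o(\Delta t)$, $d=1-\sigma\sqrt{\Delta t}+\bigl(r+\tfrac32\sigma^2\bigr)\Delta t+o(\Delta t)$, and hence $p_{\Delta t;TI}=\frac{e^{r\Delta t}-d}{u-d}=\tfrac12-\tfrac34\sigma\sqrt{\Delta t}+o(\sqrt{\Delta t})$. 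But with $\gamma=\delta=r+\tfrac32\sigma^2$ and $v=\frac{r-\sigma^2/2}{2\sigma}$ your own reduced form gives drift coefficient $\gamma-2v\sigma=2\sigma^2$ (not $r+\tfrac32\sigma^2$) and probability $\tfrac12+\frac{r-\sigma^2/2}{2\sigma}\sqrt{\Delta t}$ (not $\tfrac12-\tfrac34\sigma\sqrt{\Delta t}$); these coincide only under the incompatible special cases $r=\sigma^2/2$ and $r=-\sigma^2$. Matching both jump sizes and the probability actually forces $\gamma=\delta=r$ and $v=-\tfrac34\sigma$, while $\gamma=\delta=r+\tfrac32\sigma^2$ with $v=0$ matches the jump sizes only. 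So the one step you flag as the main obstacle, and leave unexecuted, is precisely the step that fails as described: a complete proof must push the Tian expansion through and then either correct the parameter identification or make precise the weaker sense in which the tree ``is'' Tian's, rather than treating the comparison as routine truncation.
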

\begin{proof}
The proof is done by comparing the leading terms and is straightforward. 
\end{proof}

The next proposition generalizes Tian's model by constructing a binomial model which fits all tree moments to the corresponding moments of the underlying GBM. The result is a more robust binomial tree model, as it matches not only the first tree central moments but also the kurtoses of the tree and the underlying log-normal return distribution.

\begin{prp}\label{prp: Tree mom} Setting in the binomial tree model  in \eqref{eq: mpbm} $\gamma  = \delta =  b$ and $v = 0$ implies $E(Q_{t_k + \Delta t}^{p_{\Delta t}} )^j = ES_{\Delta t}^{j}$ for all $j = 1, 2, \ldots$. 
\end{prp}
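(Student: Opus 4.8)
The plan is to evaluate $E(Q_{t_k+\Delta t}^{p_{\Delta t}})^{j}$ directly from the exponential specification \eqref{eq: mpbm exp} and match it, term by term in the powers of $\sqrt{\Delta t}$, against the GBM moment recorded in \eqref{eq: mom1}. With $v=0$ the probability $p_{\Delta t}=g$ is a constant, and with $\gamma=\delta=b$ the one-step ratio equals $\exp\!\big((b-h_u^{2}/2)\Delta t+h_u\sqrt{\Delta t}\big)$ with probability $g$ and $\exp\!\big((b-h_d^{2}/2)\Delta t-h_d\sqrt{\Delta t}\big)$ with probability $1-g$, where $h_u=\sigma\sqrt{(1-g)/g}$ and $h_d=\sigma\sqrt{g/(1-g)}$. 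Raising to the $j$-th power and factoring out $e^{jb\Delta t}$, I would be left to prove that
\[
g\,e^{-\frac{j}{2}h_u^{2}\Delta t+jh_u\sqrt{\Delta t}}+(1-g)\,e^{-\frac{j}{2}h_d^{2}\Delta t-jh_d\sqrt{\Delta t}}=e^{\frac{j(j-1)}{2}\sigma^{2}\Delta t}+o(\Delta t),
\]
since then $E(Q_{t_k+\Delta t}^{p_{\Delta t}})^{j}=e^{jb\Delta t}\big(1+\tfrac{j(j-1)}{2}\sigma^{2}\Delta t\big)+o(\Delta t)=1+\big(jb+\tfrac{j(j-1)}{2}\sigma^{2}\big)\Delta t+o(\Delta t)$, which is exactly \eqref{eq: mom1} with $n=j$.

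The only computation needed is a second-order Taylor expansion in $\sqrt{\Delta t}$. For the ``up'' branch the deterministic drift $-\tfrac{j}{2}h_u^{2}\Delta t$ combines with the quadratic term $\tfrac12(jh_u\sqrt{\Delta t})^{2}$ to leave $\tfrac{j(j-1)}{2}h_u^{2}\Delta t$, so the ``up'' exponential equals $1+jh_u\sqrt{\Delta t}+\tfrac{j(j-1)}{2}h_u^{2}\Delta t+o(\Delta t)$, and analogously the ``down'' exponential equals $1-jh_d\sqrt{\Delta t}+\tfrac{j(j-1)}{2}h_d^{2}\Delta t+o(\Delta t)$. Forming the $g,(1-g)$ mixture, the coefficient of $\sqrt{\Delta t}$ is $j\big(gh_u-(1-g)h_d\big)$ and the coefficient of $\Delta t$ is $\tfrac{j(j-1)}{2}\big(gh_u^{2}+(1-g)h_d^{2}\big)$.

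The argument now closes on the two algebraic identities $gh_u=(1-g)h_d=\sigma\sqrt{g(1-g)}$ and $gh_u^{2}+(1-g)h_d^{2}=\sigma^{2}$, both of which follow at once from the definitions of $h_u$ and $h_d$; in fact these are precisely the identities that dictate the choice of $h_u,h_d$ in \eqref{eq: mpbm exp}. The first identity makes the $\sqrt{\Delta t}$ term vanish for every $j$ (reproducing, for $j=1$, the first-moment cancellation already used in the proof of Proposition \ref{prp: conv}); the second turns the $\Delta t$ coefficient into $\tfrac{j(j-1)}{2}\sigma^{2}$, which matches the expansion of $e^{j(j-1)\sigma^{2}\Delta t/2}$. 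Re-attaching the factor $e^{jb\Delta t}$ then yields the claimed equality of the $j$-th moments for all $j$.

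I do not anticipate a genuine obstacle: once the expansion is organized as above, the whole proposition collapses to the two one-line identities for $h_u$ and $h_d$. The point that deserves a word of caution is interpretive rather than technical — ``fitting all moments'' must be read in the same asymptotic sense used throughout the paper (terms of order $o(\Delta t)$ discarded), i.e.\ for each fixed $j$ the coefficient of $\Delta t$ in $E(Q_{t_k+\Delta t}^{p_{\Delta t}})^{j}$ agrees with that of $ES_{\Delta t}^{j}$; the finite-$\Delta t$ two-point mixture is of course not exactly lognormal. It is also worth remarking that starting from the linearized form \eqref{eq: mpbm} instead of \eqref{eq: mpbm exp} produces the same $\Delta t$-coefficients (after taking $j$-th powers, $\exp$ and its first-order truncation differ only at order $o(\Delta t)$), so the statement is robust to that choice.
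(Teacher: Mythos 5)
Your proposal is correct and follows essentially the same route as the paper: a direct expansion of the $j$-th moment of the one-step ratio to order $\Delta t$, with the $\sqrt{\Delta t}$ terms cancelling via $gh_u=(1-g)h_d=\sigma\sqrt{g(1-g)}$ and the $\Delta t$ coefficient reducing to $jb+\tfrac{j(j-1)}{2}\sigma^2$ via $gh_u^2+(1-g)h_d^2=\sigma^2$, matched against \eqref{eq: mom1}. The only cosmetic difference is that you expand the exponential form \eqref{eq: mpbm exp} while the paper expands the linearized form \eqref{eq: mpbm} by the binomial theorem, and as you note these agree up to $o(\Delta t)$.
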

\begin{proof}
The result can be viewed as a new version of the well-known moment problem for the log-normal distribution, see \cite{H63}: the lognormal distribution is not determined by the set of its moments. Applying (8) and inductively calculating, 
\begin{align*}
E(Q_{t_k + \Delta t}^{p_{\Delta t}} )^j &= \left(1 + b\Delta t + \sqrt{\frac{1 - p_{\Delta t}}{p_{\Delta t}}}\sigma\sqrt{\Delta t} \right)^jp_{\Delta t} \\ &\qquad + \left(1 + b\Delta t - \sqrt{\frac{p_{\Delta t}}{1 - p_{\Delta t}}}\sigma\sqrt{\Delta t} \right)^j(1 - p_{\Delta t}) \\
&= \left(1 + \left(jb + \frac{j(j-1)}{2}\frac{1 - p_{\Delta t}}{p_{\Delta t}}\sigma^2\right)\Delta t + j\sqrt{\frac{1 - p_{\Delta t}}{p_{\Delta t}}}\sigma\sqrt{\Delta t}\right)p_{\Delta t}\\
&= 1 + \left(jb + \frac{j(j-1)}{2}\sigma^2\right)\Delta t
\end{align*}
proves the proposition. 
\end{proof}

\section{On the Problem of Continuity in Option Pricing}

In this section we make use of our multi-purpose binomial model to resolve a discontinuity problem in option pricing. We first describe the problem for the CRR binomial model and then provide the solution. 

\subsection{The Discontinuity Problem in Option Pricing Models}

To illustrate the discontinuity problem in the CRR model,  consider a one-step binomial model, where (1) $S_0=1$ is the current (at $t=0$)  (one-share ) stock price; (2) $f_0$  is the unknown current option price; and (3) $S_T = u = e^{\sigma\sqrt{T}}>1$ (resp. $d=1/u$) with probability $p$  (resp. $1-p$).  The option payoff at maturity $T$ is
\begin{equation*}
f_T = \left\{\begin{array}{ll}f_u = g(S_0u) & w.p.\ p \\ f_d = g(S_0d) & w.p.\ 1 - p\end{array}\right.
\end{equation*}
for some option payoff function $g(x)$, $x \in \mathbb{R}$. Indeed, when $p \in (0, 1)$ the value of the option at $t = 0$ is given by $f_0(p) = f_0(1/2) = e^{-rT}(\tilde q f_{u} + (1 - \tilde q)f_{d})$ with $\tilde q = \frac{e^{rT} - d}{u - d}$ regardless of how close $p$ is to 0 or 1. However, for $p = 0$ or $p = 1$, the option values are $f_0(0) = e^{-rT}f_d$ and $f_0(1) = e^{-rT}f_u$. The discontinuity at  $p = 0$ and $p = 1$
\begin{align*}
f_{0}(0) - \lim_{p \rightarrow 0^+}f_0(p) &= e^{-rT}\tilde q (f_d - f_u)\\
f_{0}(1) - \lim_{p \rightarrow 1^-}f_0(p) &= e^{-rT}(1 - \tilde q)(f_u - f_d)
\end{align*}
seems unnatural. In continuous time, this problem translates to no sensitivity of the option price to the mean of the GBM which is unreasonable in a non-Gaussian setting, see \cite[Chapter 4.5]{BP00}. 

\subsection{Resolving the Discontinuity Problem}

Suppose that returns of frequency $\Delta t$ are collected and the following quantities are estimated from that data: (1) $p_{\Delta t}$, (2) the upward movement in $\Delta t$, $1 + \gamma\Delta t + \sqrt{\frac{1 - p_{\Delta t}}{p_{\Delta t}}}\sigma\sqrt{\Delta t}$, (3) the downward movement in $\Delta t$, $1 + \delta\Delta t + \sqrt{\frac{1 - p_{\Delta t}}{p_{\Delta t}}}\sigma\sqrt{\Delta t}$, and (4) the volatility $\sigma$.

Having estimated all parameters, we set up to calculate the risk-neutral probabilities. We construct a portfolio of the derivative instrument and a delta $\Delta_{t_k}:= \Delta^{(p_{\Delta t})}_{t_k}$ position in the underlying stock at time $t_k$. The payoff of the portfolio $P^{t_{k + 1}} = P^{t_k + \Delta t}$ one time step ahead equals
\begin{align*}
P^{t_{k + 1}} = \left\{\begin{array}{ll}\Delta_{t_k}S_{t_k}u - f_{u, t_{k+1}} = \Delta_{t_k}S_{t_k}\left(1 + \gamma \Delta t + \sqrt{\frac{ 1 -p_{\Delta t} }{p_{\Delta t}}}\sigma\sqrt{\Delta t}\right) - f_{u, t_{k + 1}} & w.p.\ p_{\Delta t} \\ \Delta_{t_k}S_{t_k}d - f_{d, t_{k + 1}} = \Delta_{t_k}S_{t_k}\left(1 + \delta \Delta t - \sqrt{\frac{ p_{\Delta t} }{1 - p_{\Delta t}}}\sigma\sqrt{\Delta t}\right) - f_{d, t_{k + 1}} & w.p.\ 1 - p_{\Delta t}\end{array} \right.
\end{align*}
To compute the delta, we set the variance of the portfolio equal to zero, $var(P^{t_{k + 1}}) = 0$. The value of delta obtained in this way equals
$$\Delta_{t_k} = \frac{1}{S_{t_k}}\times\frac{f_{u,t_{k+1}} - f_{d,t_{k + 1}}}{(\gamma - \delta)\Delta t + \frac{\sigma\sqrt{\Delta t}}{\sqrt{p_{\Delta t}(1 - p_{\Delta t})}}}$$
Because the portfolio constructed in this way is riskless, the price of the portfolio at $t_k$ equals the present value of the future certain cashflow, $\Delta_{t_k}S_{t_k} - f_{t_k} = e^{-r\Delta t}E(P^{t_k + \Delta t})$. Thus, for the price of the derivative we obtain
$$f_{t_k} = e^{-r\Delta t}(Q_{\Delta t}f_{u, t_{k + 1}} + (1 - Q_{\Delta t})f_{d, t_{k+1}})$$
in which the risk-neutral probability equals
$$Q_{\Delta t} = \frac{(r - \delta) \sqrt{p_{\Delta t}(1 - p_{\Delta t})}\sqrt{\Delta t} + p_{\Delta t}\sigma}{(\gamma - \delta) \sqrt{p_{\Delta t}(1 - p_{\Delta t})}\sqrt{\Delta t} + \sigma}$$
In the particular case of $\gamma = \delta$, $Q_{\Delta t} = p_{\Delta t} - \theta\sqrt{p_{\Delta t}(1 - p_{\Delta t})}$ in which $\theta = \frac{\gamma - r}{\sigma}$ denotes the market price for risk.  The risk-neutral probability $Q_{\Delta t}$ is continuous in $p_{\Delta t}$ approaching the limits 0 and 1 which resolves the discontinuity problem. 

\section{Estimating $p_{\Delta t}$ under the Physical Measure}

In this section, we provide empirical evidence that (\textit{i}) $p$ is generally not equal to $1/2$ and (\textit{ii}) $p$ is not constant through time. We take the daily returns of the S\&P 500 index from January 3, 1950 to May 20, 2016 and we consider the indicator of the event ``market is up'' on a daily basis. The total number of observations is 16,703. The number of days with a positive daily return is 8,836 and $\hat p = 0.529$ with a 95\% confidence interval of $[0.5214, 0.5366]$ which rejects the hypothesis $H_0: p = 1/2$. 

\begin{figure}
\centering
\includegraphics[scale=0.6]{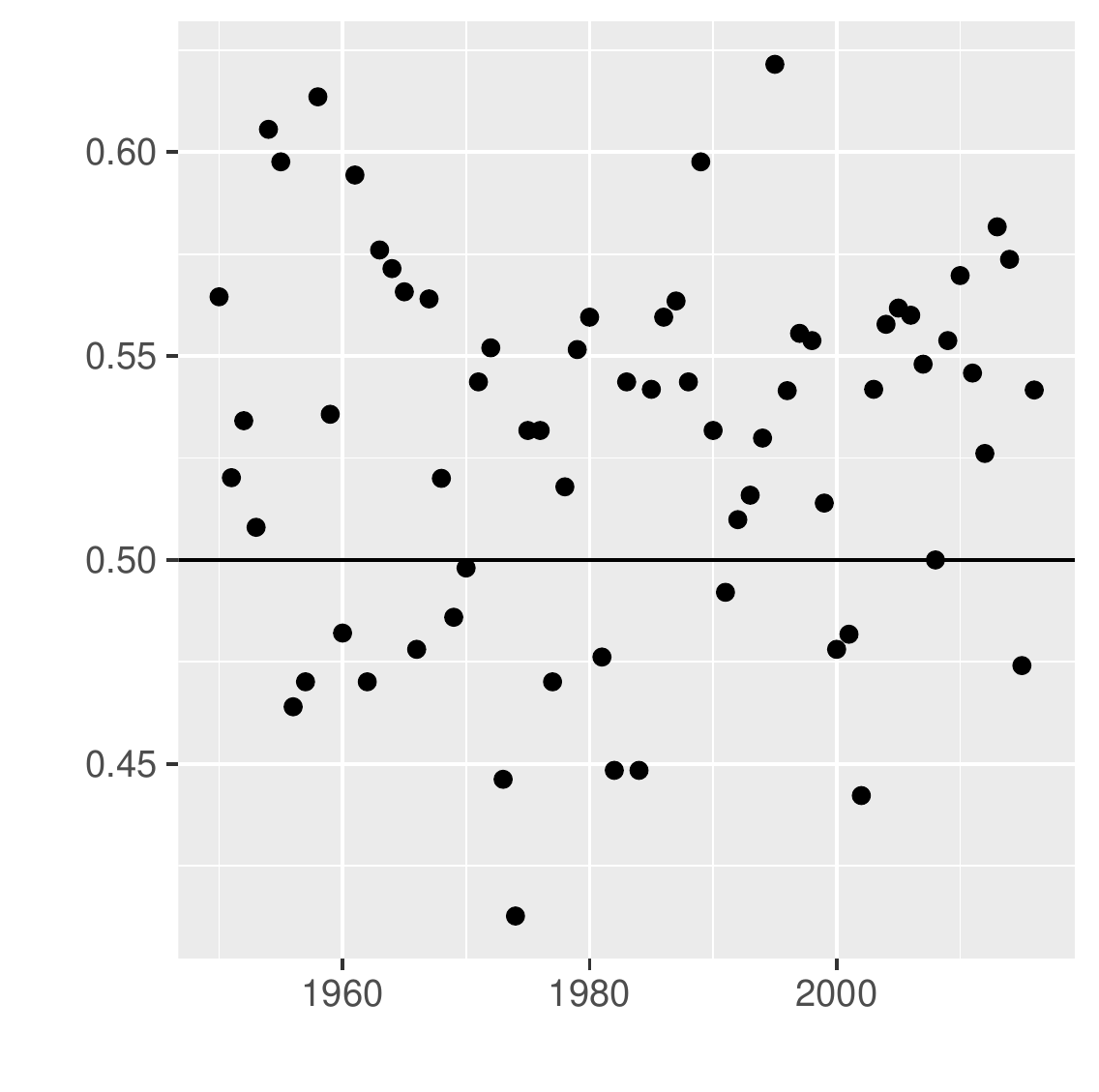}
\caption{Estimated probabilities $p$ for a daily upward move of S\&P 500 for each year from 1950 to 2016. }\label{fig: prob}
\end{figure}

We also test if we can accept the hypothesis that $p$ is the same if estimated for each year in the sample, $H_0: p_1 = p_2 = \ldots = p_k$ where $k$ equals the number of years in the sample against the alternative that it is different in at least one year.\footnote{We use the exact binomial test and Newcombe's test as implemented in binom.test, and prop.test in the standard statistics library in R. } The $p$-value of the test is $4.14\times 10^{-7}$ which strongly rejects $H_0$. The estimated probabilities are plotted in Figure \ref{fig: prob}. 

\section{Calibrating $p_{\Delta t}$ under the Risk Neutral Measure}

To estimate model parameters under the risk neutral measure, we consider the following calibration problem. First, we select a set of call options written on S\&P 500 which  mature in less than 100 working days. The model parameters are calibrated on August 7, August 8, September 10, September 15 (the date of Lehman Brothers Collapse), and September 16, 2008. We select the 13-week Treasury bond index (IRX) as the risk-free asset. Also we set the number of time steps $n$ as the number of days to maturity and $\varDelta t = 1/252$.

Second, we calibrate the parameters of the binomial tree models by minimizing the error between the theoretical prices and the observed closing market prices on a particular day as measured by the sum of squared differences (RMSE). Several other measures of the error are also computed: the average absolute error (AAE), the average absolute error as a percentage of the mean price (APE), and the average relative percentage error (ARPE), defined as follows:
\begin{align*}
&    \textup{AAE} =
    \sum_{j=1}^N \frac{|{P}_j -
    \widehat{P}_j|}{N},
~~    \textup{APE} = \frac{\sum_{j=1}^N \frac{|{P}_j -
    \widehat{P}_i|}{N}}{\sum_{j=1}^N\frac{{P}_j}{N}},
    \\
&
    \textup{ARPE} = \frac{1}{N}
    \sum_{j=1}^N \frac{|{P}_j -
    \widehat{P}_j|}{{P}_j},
~~    \textup{RMSE} = \sqrt{
    \sum_{j=1}^N \frac{({P}_j -
    \widehat{P}_j)^2}{N}},
\end{align*}
where $\widehat{P}_j$ and ${P}_j$ are model prices and observed market prices of call options with strikes $K_j$, time to maturity $T_j$,  $j\in\{1,\ldots,N\}$, and $N$ is the number of observed call option prices.

We consider two cases of the multi-propose binomial tree. In the first case (MP-Bin1), we use the setting of Proposition \ref{prp: Tree mom}. That is, we assume $\gamma=\delta=r$ and $v=0$, and estimate the parameters $\sigma$ and $p_{\varDelta t} = g$. In the second case (MP-Bin2), we estimate the parameters $\gamma, g, p_{\varDelta t}\in(0,1)$, and $\sigma>0$ by means of the least squares curve fit with $v=(p_{\varDelta t}-g)/\sqrt{\varDelta t}$ and $\delta = (r - g\gamma)/(1-g)$. 

To compare the performance to the other binomial trees, we calibrate $\sigma$ of CRR, Jarrow-Rudd and Tian models using the same method. The calibrated parameters are provided in Table \ref{Table:CalibrationRiskNeutralMeasure} and the four types of error for the calibration problem are presented in Table \ref{Table:ErrorEstimator}. 

The MP-Bin2 model has the smallest error for each parameter estimation and the MP-Bin1 model has the second smallest error for each parameter estimation. Finally, the tables show that the calibrated probability $p_{\Delta t}$ can be quite different from $1/2$ providing a better fit to the observed option prices. 

\begin{table}[t]
%\vspace{-2cm}
\begin{center}
%\begin{footnotesize}
\begin{tabular}{cccc}%{c@{ }|@{ }c@{ }|@{}cccccccccc}
\hline
Date & Model & \multicolumn{2}{c}{Parameters} \\
\hline
Aug 06,2008 & CRR & $\sigma = 0.1772$ &  $p_{\varDelta t} = 0.5001$ \\ 
           & Jarrow-Rudd &$\sigma = 0.1773$&  $p_{\varDelta t} = 1/2$ \\ 
           & Tian & $\sigma = 0.1766$&  $p_{\varDelta t} = 0.4917$ \\ 
           &MP-Bin1 & $\sigma = 0.1776$&  $p_{\varDelta t} = 0.5409$ \\ 
           &MP-Bin2 & $\sigma = 0.1549$&  $p_{\varDelta t} = 0.6259$ \\ 
       &          & \multicolumn{2}{c}{($\gamma = 0.2267$, $\delta = 0.0160$, $g = 0.0001$)} \\ 
\hline
Aug 07,2008 & CRR & $\sigma = 0.1875$ &  $p_{\varDelta t} = 0.4998$ \\ 
           & Jarrow-Rudd &$\sigma = 0.1875$&  $p_{\varDelta t} = 1/2$ \\ 
           & Tian & $\sigma = 0.1870$&  $p_{\varDelta t} = 0.4912$ \\ 
           &MP-Bin1 & $\sigma = 0.1874$&  $p_{\varDelta t} = 0.5051$ \\ 
           &MP-Bin2 & $\sigma = 0.1756$&  $p_{\varDelta t} = 0.5404$ \\ 
       &          & \multicolumn{2}{c}{($\gamma = 0.1324$, $\delta = 0.0163$, $g = 0.0001$)} \\ 
\hline
Sep 03,2008 & CRR & $\sigma = 0.1977$ &  $p_{\varDelta t} = 0.4995$ \\ 
           & Jarrow-Rudd &$\sigma = 0.1977$&  $p_{\varDelta t} = 1/2$ \\ 
           & Tian & $\sigma = 0.1975$&  $p_{\varDelta t} = 0.4907$ \\ 
           &MP-Bin1 & $\sigma = 0.1975$&  $p_{\varDelta t} = 0.6562$ \\ 
           &MP-Bin2 & $\sigma = 0.1847$&  $p_{\varDelta t} = 0.6135$ \\ 
       &          & \multicolumn{2}{c}{($\gamma = 0.1713$, $\delta = 0.0166$, $g = 0.0001$)} \\ 
\hline
Sep 10,2008 & CRR & $\sigma = 0.2049$ &  $p_{\varDelta t} = 0.4993$ \\ 
           & Jarrow-Rudd &$\sigma = 0.2048$&  $p_{\varDelta t} = 1/2$ \\ 
           & Tian & $\sigma = 0.2049$&  $p_{\varDelta t} = 0.4903$ \\ 
           &MP-Bin1 & $\sigma = 0.2049$&  $p_{\varDelta t} = 0.5215$ \\ 
           &MP-Bin2 & $\sigma = 0.2188$&  $p_{\varDelta t} = 0.5478$ \\ 
       &          & \multicolumn{2}{c}{($\gamma = 0.0001$, $\delta = 0.1742$, $g = 0.9075$)} \\ 
\hline
Sep 15,2008 & CRR & $\sigma = 0.2548$ &  $p_{\varDelta t} = 0.4970$ \\ 
           & Jarrow-Rudd &$\sigma = 0.2538$&  $p_{\varDelta t} = 1/2$ \\ 
           & Tian & $\sigma = 0.2571$&  $p_{\varDelta t} = 0.4879$ \\ 
           &MP-Bin1 & $\sigma = 0.2523$&  $p_{\varDelta t} = 0.5462$ \\ 
           &MP-Bin2 & $\sigma = 0.3371$&  $p_{\varDelta t} = 0.6956$ \\ 
       &          & \multicolumn{2}{c}{($\gamma = 0.0001$, $\delta = 0.9536$, $g = 0.9916$)} \\ 
\hline
Sep 16,2008 & CRR & $\sigma = 0.2412$ &  $p_{\varDelta t} = 0.4973$ \\ 
           & Jarrow-Rudd &$\sigma = 0.2411$&  $p_{\varDelta t} = 1/2$ \\ 
           & Tian & $\sigma = 0.2418$&  $p_{\varDelta t} = 0.4886$ \\ 
           &MP-Bin1 & $\sigma = 0.2433$&  $p_{\varDelta t} = 0.5715$ \\ 
           &MP-Bin2 & $\sigma = 0.3084$&  $p_{\varDelta t} = 0.6545$ \\ 
       &          & \multicolumn{2}{c}{($\gamma = 0.0001$, $\delta = 0.6275$, $g = 0.9865$)} \\ 
\hline
\end{tabular}
%\end{footnotesize}
%{\\ $N$ is the number of observed call options for the calibration.} 
\caption{\label{Table:CalibrationRiskNeutralMeasure} Calibrated parameter values using observed option prices. }
\end{center}
\end{table}           
           
\begin{table}
\begin{center}
%\begin{footnotesize}
\begin{tabular}{cccccc}
\hline
Date & Model & AAE & APE & ARPE & RMSE \\
\hline
Aug 06,2008 &CRR & $1.8061$ & $0.0703$ & $0.4249$ & $2.2644$ \\ 
           & Jarrow-Rudd & $1.8081$ & $0.0704$ & $0.4265$ & $2.2660$ \\ 
           & Tian & $1.8519$ & $0.0721$ & $0.4295$ & $2.3808$ \\ 
           & MP-Bin1 & $1.6236$ & $0.0632$ & $0.3897$ & $1.9956$ \\ 
           & MP-Bin2 & $\textbf{1.4046}$ & $\textbf{0.0547}$ & $\textbf{0.2536}$ & $\textbf{1.7553}$ \\ 
\hline
Aug 07,2008 &CRR & $1.8530$ & $0.0667$ & $0.4421$ & $2.2840$ \\ 
           & Jarrow-Rudd & $1.8506$ & $0.0666$ & $0.4413$ & $2.2829$ \\ 
           & Tian & $1.8700$ & $0.0673$ & $0.4691$ & $2.2412$ \\ 
           & MP-Bin1 & $1.8312$ & $0.0659$ & $0.4334$ & $2.2777$ \\ 
           & MP-Bin2 & $\textbf{1.7786}$ & $\textbf{0.0640}$ & $\textbf{0.3868}$ & $\textbf{2.2404}$ \\ 
\hline
Sep 03,2008 &CRR & $2.2132$ & $0.0310$ & $0.3581$ & $2.6879$ \\ 
           & Jarrow-Rudd & $2.2091$ & $0.0309$ & $0.3577$ & $2.6840$ \\ 
           & Tian & $2.2433$ & $0.0314$ & $0.3713$ & $2.7282$ \\ 
           & MP-Bin1 & $1.7327$ & $0.0243$ & $0.2399$ & $2.1267$ \\ 
           & MP-Bin2 & $\textbf{1.9424}$ & $\textbf{0.0272}$ & $\textbf{0.2544}$ & $\textbf{2.3893}$ \\ 
\hline
Sep 10,2008 &CRR & $1.2890$ & $0.0749$ & $0.3397$ & $1.7425$ \\ 
           & Jarrow-Rudd & $1.2931$ & $0.0752$ & $0.3377$ & $1.7409$ \\ 
           & Tian & $1.2876$ & $0.0748$ & $0.3513$ & $1.8472$ \\ 
           & MP-Bin1 & $1.2859$ & $0.0748$ & $0.3177$ & $1.7016$ \\ 
           & MP-Bin2 & $\textbf{1.2328}$ & $\textbf{0.0717}$ & $\textbf{0.3258}$ & $\textbf{1.6401}$ \\ 
\hline
Sep 15,2008 &CRR & $2.5501$ & $0.1540$ & $0.4461$ & $3.4683$ \\ 
           & Jarrow-Rudd & $2.4660$ & $0.1489$ & $0.4394$ & $3.3530$ \\ 
           & Tian & $2.6321$ & $0.1589$ & $0.4560$ & $3.6495$ \\ 
           & MP-Bin1 & $2.3143$ & $0.1397$ & $0.4483$ & $3.1775$ \\ 
           & MP-Bin2 & $\textbf{1.6918}$ & $\textbf{0.1022}$ & $\textbf{0.4429}$ & $\textbf{2.2074}$ \\ 
\hline
Sep 16,2008 &CRR & $2.7063$ & $0.0895$ & $0.4272$ & $3.4906$ \\ 
           & Jarrow-Rudd & $2.6631$ & $0.0881$ & $0.4250$ & $3.4441$ \\ 
           & Tian & $2.9515$ & $0.0976$ & $0.4638$ & $3.7932$ \\ 
           & MP-Bin1 & $2.4370$ & $0.0806$ & $0.3896$ & $3.2011$ \\ 
           & MP-Bin2 & $\textbf{1.9250}$ & $\textbf{0.0637}$ & $\textbf{0.3843}$ & $\textbf{2.6229}$ \\ 
\hline
\end{tabular}
%\end{footnotesize}
%{\\ \footnotesize } 
\caption{\label{Table:ErrorEstimator} The different errors at the calibrated parameter values. }
\end{center}
\end{table}

\section{Conclusion}
We propose a new multi-purpose binomial tree model which generalizes the CRR, the Jarrow-Rudd, and the Tian tree models. Our model is more flexible and can match asymptotically all moments of the limiting log-normal distribution. We apply the model to resolve a discontinuity problem in option pricing when the probability for ``up'' converges to 0 or 1. We also provide an estimation and a calibration example which illustrate that the suggested binomial model is more realistic than the three alternatives. 

% {\footnotesize \bibliography{..//References//AllRefs}}
\section*{Acknowledgments}
The authors are grateful for the helpful comments of an anonymous referee. 
\section*{References}
\bibliography{..//References//AllRefs}

\begin{thebibliography}{7}
\expandafter\ifx\csname natexlab\endcsname\relax\def\natexlab#1{#1}\fi
\expandafter\ifx\csname url\endcsname\relax
  \def\url#1{\texttt{#1}}\fi
\expandafter\ifx\csname urlprefix\endcsname\relax\def\urlprefix{URL }\fi

\bibitem[{Bouchard and Potters(2000)}]{BP00}
Bouchard, J.-P., Potters, M., 2000. Theory of Financial Risks: From Statistical
  Physics to Risk Management. Cambridge University Press.

\bibitem[{Cox et~al.(1979)Cox, Ross, and Rubinstein}]{CRR79}
Cox, J., Ross, S., Rubinstein, M., 1979. Options pricing: a simplified
  approach. Journal of Financial Economics 7, 229--263.

\bibitem[{Davydov and Rotar(2008)}]{DR08}
Davydov, Y., Rotar, V., 2008. On a non-classical invariance principle. Stat.
  Probab. Letters V(78), 2031--2038.

\bibitem[{Feller(1971)}]{Fel71}
Feller, W., 1971. An Introduction to Probability Theory and its Application.
  2nd Edition. Wiley, New York.

\bibitem[{Heyde(1963)}]{H63}
Heyde, C.~C., 1963. On a property of the lognormal distribution. J. Royal
  Statist. Soc., Ser. B 25, 392--393.

\bibitem[{Jarrow and Rudd(1983)}]{JR83}
Jarrow, R., Rudd, A., 1983. Option Pricing. Homewood, IL: Dow Jones-Irwin
  Publishing.

\bibitem[{Tian(1993)}]{T93}
Tian, Y., 1993. A modified lattice approach to option pricing. The Journal of
  Futures Markets 13, 563--577.

\end{thebibliography}
\end{document}